\title{Dempster-Shafer P-values: Thoughts on an Alternative Approach for Multinomial Inference}
\author{Kentaro Hoffman,  Kai Zhang, Tyler McCormick, Jan Hannig}
\newtheorem{theorem}{Theorem}
\newtheorem{definition}{Definition}[section]
\newcolumntype{R}[2]{%
    >{\adjustbox{angle=#1,lap=\width-(#2)}\bgroup}%
    l%
    <{\egroup}%
}
\newcommand*\rot{\multicolumn{1}{R{45}{1em}}}
\newtheorem{exmp}{Example}[section]
\begin{document}

\def\spacingset#1{\renewcommand{\baselinestretch}%
{#1}\small\normalsize} \spacingset{1}

\maketitle
In this paper, we demonstrate that a new measure of evidence we developed called the Dempster-Shafer p-value which allow for insights and interpretations which retain most of the structure of the p-value while covering for some of the disadvantages that traditional p-values face. Moreover, we show through classical large-sample bounds and simulations that there exists a close connection between our form of DS hypothesis testing and the classical frequentist testing paradigm. We also demonstrate how our approach gives unique insights into the dimensionality of a hypothesis test, as well as models the effects of adversarial attacks on multinomial data. Finally, we demonstrate how these insights can be used to analyze text data for public health through an analysis of the Population Health Metrics Research Consortium dataset for verbal autopsies.

\section{Introduction}
Although p-values have historically played a ubiquitous role in many statistical analyses, the past decade has seen a flurry of criticism and scandals on their use and misuse. For example, in a highly cited \textit{Nature} article, the authors found in a meta-analysis across 4 highly cited sociological and medical journals, an incredible 402 out of 791 $(51\%)$ articles erroneously conflate non-significance with no effect \cite{amrhein_greenland_mcshane_2019}, a basic misconception commonly covered in a variety of introductory statistics textbooks \cite{rowntree_2018,illowsky_dean_illowsky_2018,diez_barr_cetinkaya-rundel_2019} In another, in response to the landmark 2016 \textit{American Statistical Association} statement of propose on the proper use of p-values, the \textit{New England Journal of Medicine} announced a change in reporting standards for p-values \cite{harrington_al}. However, this new standard itself misinterpreted a p-value that claims ``A p-value of 0.05 carries a 5\% risk of a false positive result" \cite{stone_2016}. This was despite the ASA statement of propose explicitly stating that this is not correct. 

In response to these issues, the past few years have seen a new movement in statistics- one that seeks to end the uncritical use and misuse of the 0.05 p-value. This paper seeks to join this and demonstrate that a new measure of evidence we developed called ``Dempster-Shafer p-value" provides new insights and interpretations which retain most of the structure of the p-value while covering for some of the disadvantages that traditional p-values face. While this is focused on p-values for multinomial inference, it is the author's hope that the useful properties, advantages, and disadvantages that arise from incorporating Dempster-Shafer inference to the p-value will be thought-provoking in the continued conversation on the future of statistical hypothesis testing.  Finally, we demonstrate the utility of the proposed approach in the context of verbal autopsies, a strategy commonly used in global health to determine the distribution of deaths by cause in settings where deaths occur outside of hospitals and are not formally recorded.

\section{Demspter-Shafer Inference}

Dempster-Shafer (DS) Inference was developed in the 1960-70s by Arthur Dempster \cite{10.1214/aoms/1177698328} and Glen Shafer \cite{dempster1976mathematical} as an alternative approach to prior-free statistical inference. What makes DS distinctive is that for an event $A$, it ascribes the triple, ($p(A)$, $p(A^c)$, $1 -p(A) - p(A^c))$, to quantify the evidence \textbf{for $A$}, \textbf{against $A$}, and \textbf{unknown} rather than traditional inferential approaches that denote a probability double $(P(A), 1-P(A))$.  Historically, DS Inference has played an important role in the development of artificial intelligence \cite{yager2008classic} and reliability engineering \cite{sentz}, leading to the development of the field of imprecise probability. 

Despite its popularity in other fields, DS Inference has had a smaller impact on the statistical world in which it originated. It has been hypothesized that it failed to take off due to technical difficulties in interpreting beliefs and plausibilities \cite{PEARL1988211}, computational burden, and perhaps the biggest problem: the lack of conventional long-run frequentist properties under repeated sampling \cite{10.1214/10-STS322}. However, in the last several years, there has been renewed interest from the statistical community in other forms of prior-free forms of posterior inference which incorporate epistemic uncertainty. This has come from closely related topics such as  Generalized Fiducial Inference \cite{gfi}, Confidence Distributions \cite{https://doi.org/10.1111/insr.12000},
Inferential Models \cite{martin2015inferential}, and faster computational techniques using Gibbs sampling on polytopes  \cite{doi:10.1080/01621459.2021.1881523}. We build on this rising wave and demonstrate that a Dempster-Shafer inference procedure known as \textit{Dirichlet-DS} can be adapted to perform a wide variety of Dempster-Shafer Hypothesis Tests which are computationally fast and model epistemic uncertainty: in particular, we focus on the uncertainty that can result from adversarial attacks. Moreover, we will show through classical large sample bounds and simulations that there exists a close connection between our form of DS hypothesis testing and the classical frequentist testing paradigm. Finally, we will demonstrate how our approach gives unique insights into the dimensionality of a hypothesis test, as well as models the effects of adversarial attacks on multinomial data.

\section{Multinomial Data Generation via \textit{Dirichlet-DSM}}
In a Multinomial Data generation scheme, we observe $n$, $k$-dimensional
multinomial observations $(n_1, ... n_k)$ with unknown but fixed probabilities $\mathcal{P} = (p_1, ... p_k)$ such that $\sum_{i=1}^k p_i = 1$
$$n_1, ... , n_k \sim \text{ Multinomial }(1, p_1, ... , p_k). $$
Each draw from the multinomial can be represented in component form as a sum of binary vectors using $y_{ij} \in \{0,1\}$ for $i \in \{1, ... n\}$ and $j \in \{1, ... , k\}$ such that $\sum_{j} y_{ij} = 1$ and $\sum_{i} y_{ij} = n_j$. We will interchangeably referring to count and component form as necessary.\\\\
To perform DS inference, it is first necessary to define the data generation process that yielded $(n_1, ..., n_k)$. This can be done through prior insight into the data generation scheme, or it can be chosen to facilitate the inference of the parameters. Consistent with the second reason, we will employ the data generation process described in \citet{lawrence} as the \textit{Dirichlet-DSM} process. We first define an unknown permutation $\Pi$ of the $k$ classes as: $\Pi(\{1,...,k\}) = \{ \pi(1), ..., \pi(k)\}$ and the inverse permutation as $\Pi^{-1}(\{ \pi(1), ..., \pi(k)\}) = \{1, ... , k\}$. This allows us to define $I_{\pi,a}$, as an interval of length $p_a$ whose location on the unit interval is determined by:
$$I_{\pi,a} = \begin{cases}
[0, p_a) & \text{ if } \pi(1) = a \\
[\sum_{i=1}^{\Tilde{a}-1} p_{\pi(i)} , \sum_{i=1}^{\Tilde{a}-1} p_{\pi(i)}  + p_a) & \text{ if } \pi(1) \neq a 
\end{cases} $$
where $\tilde{a} = \pi^{-1}(a)$ for $a \in \{1, ..., j\}$. Regardless of choice of permutation $\Pi$, each interval $I_{\pi,a}$ will be of length $p_a$ and combing together all the intervals yields $[0,1)$. Thus we can generate multinomial draws through the data generating process:
$$y_{ij} = a  \textit{   iff  }   U_i \in I_a  $$
where $(U_1, ... U_n)$ are $n$ independent, uniform random variables on [0,1]. This data generating process holds two distinct advantages over related processes such as \textit{ Simplex-DSM}  \citet{10.1214/aoms/1177699517},  \textit{interval-DSM}  in \citet{gfi}, and \textit{Simplex-DS} in \citet{doi:10.1080/01621459.2021.1881523}. First, unlike \textit{interval-DSM}, this inference procedure for \textit{Dirichlet-DSM} is not sensitive to the order of the categories. Further, unlike the \textit{simplex-DSM} and \textit{Simplex-DS}, \textit{Dirichlet-DSM} has a relatively simple expression for its posterior estimation which requires no expensive acceptance-rejection or Gibbs sampling, this makes the method more feasible when combine with the subsequent repeated required for hypothesis testing.
\subsection{A Recipe for Multinomial DS Inference using \textit{Dirichlet-DSM}}
Inferring $\mathcal{P}$ requires coming up with an inversion of the data generating equation. We first assume that we have observed $n$ multinomial draws generated by the \textit{Dirichlet-DSM} scheme and the permutation that was used for the draws, $\Pi$. For each multinomial draw $(y_{i1}, ... , y_{ij})$, let $a_i \in \{1, ... , k\}$ be the observed component, $a_i = \sum_{j=1}^k j I(y_{ij} = 1)$ for the $i$-th draw. Then, define $\tilde{\Pi}_{i, \Pi, a_i} (\{1, ..., k \}) = (\tilde{\pi}(1), ... \tilde{\pi}(k))$ as any permutation that has the observed category, $a_i$, first in its order. In other words, this means that $\tilde{\pi}(a_i) = 1$. Now, let $D_{\tilde{\Pi}_i, a_i}$ be a function which takes a point on $[0,1)$, and maps it back to $[0,1)$ but with the order of the intervals permuted by $\tilde{\Pi}$. Being a well defined bijection determine by $\tilde{\Pi}$, $D$ also admits a well defined inverse $D^{-1}_{\tilde{\Pi}_i, a_i}$ which undoes the reordering. By applying $D^{-1}_{\tilde{\Pi}_i, a_i}$ to $U_i$, we can define an auxiliary variable:
$$W_i = D_{\tilde{\Pi}_i, a_i}(U_i).  $$
Due to the bijective nature of $D_{\tilde{\Pi}_i, a_i}$, $W_i$ is distributed $U[0,1)$. When conditioning $W_i$ on the observed data, $a_i$, $W_i$ becomes:
$$W_i | a_i \sim U(0, p_{a_i}). $$
To perform a fiducial inversion, we start with an iid copy of $W_i$ which we call $W_i^*$. Being iid, $W_i^*$ is also distributed $U(0,1)$ and we would like to derive the distribution of $W_i^*$  conditional on this being consistent with the observed data (This is refers to the "Continuing to believe" justification in DS inference). When $U_i^*$ is compatible with $a_i$, this implies that $U_i^* \in I_{{\Pi}_{i}, a_i}$ and by construction $U_i^* \in I_{\tilde{\Pi}_{i}, a_i}$. Applying $D$'s bijective behavior, this yields the equivalent condition that: 
$$W_i ^* \leq p_{a_i} .$$
Thus, our posterior random set is defined by jointly generating $n$ auxiliary variables $W_i ^*$ such that each satisfies:
$$W_i ^* |  \sum_i  W_i^*\leq 1.$$
As for the exact definition of this joint distribution, this constraint on the auxiliary variables is identical to the constraints used in theorem 4.1 of \citet{lawrence} so both though DS and Fiducial derivations, we arrive to the same statement that:
\begin{theorem}
    Given data from a $k$ dimensional multinomial, $(n_1, ... , n_k)$ the posterior random set $\hat{\mathcal{P}}$ is distributed according to:
    \begin{align*}
        \hat{\mathcal{P}} =(\hat{p}_1, ... , \hat{p}_k) &: \hat{p}_a \geq  Z_a, a \in \{1, ... ,k\} \\
        & (Z_0, Z_1, ... , Z_k) \sim Dirichlet(1, n_1, ..., n_k)
    \end{align*}
    \label{eq:prs1}
\end{theorem}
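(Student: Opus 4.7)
The plan is to translate the compatibility condition on the auxiliary variables $W_i^*$ already derived above into a constraint on the random set, and then recognize the joint distribution of the relevant maxima as Dirichlet. First I would observe that since $W_i^* \leq p_{a_i}$ must hold for every observation in category $a$, the effective constraint on $p_a$ is $p_a \geq \max_{i : a_i = a} W_i^*$. Defining $Z_a := \max_{i : a_i = a} W_i^*$ for $a \in \{1, \ldots, k\}$ immediately gives the form of the random set $\hat{\mathcal{P}} = \{\mathcal{P} \in \Delta^{k-1} : \hat{p}_a \geq Z_a\}$, reducing the theorem to identifying the joint distribution of $(Z_1, \ldots, Z_k)$ after imposing compatibility with the simplex.

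Second, since the $W_i^*$'s are iid uniform on $[0,1)$ and partition into independent groups of sizes $n_a$, the maxima $Z_a$ are mutually independent with $Z_a \sim \mathrm{Beta}(n_a, 1)$, giving unconditional joint density $\prod_{a=1}^k n_a z_a^{n_a - 1}$ on $[0,1]^k$. For the random set to be nonempty we must condition on $\sum_{a=1}^k Z_a \leq 1$; this is precisely the DS \emph{continuing to believe} step that eliminates auxiliary draws inconsistent with any valid $\mathcal{P}$. Introducing the slack $Z_0 := 1 - \sum_{a=1}^k Z_a$ as a deterministic function (unit Jacobian), the joint density of $(Z_0, Z_1, \ldots, Z_k)$ on the probability simplex becomes proportional to $z_0^{0}\prod_{a=1}^k z_a^{n_a - 1}$, which after renormalization by the standard Dirichlet integral is exactly the $\mathrm{Dirichlet}(1, n_1, \ldots, n_k)$ density.

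Third, I would verify invariance with respect to the choice of the permutation $\Pi$: the bijection $D_{\tilde{\Pi}_i, a_i}$ was constructed precisely so that the conditional distribution of $W_i$ given $a_i$ reduces to $U(0, p_{a_i})$ regardless of $\Pi$, hence the joint law of the $Z_a$'s depends only on the cell counts $(n_1, \ldots, n_k)$. I would then close by noting that this matches Theorem 4.1 of Lawrence et al.\ via the fiducial route, providing an independent check.

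The main obstacle in this plan is justifying the conditioning step rigorously. The unconditional density of $(Z_1, \ldots, Z_k)$ lives on the cube $[0,1]^k$, while the conditioned density is supported on the sub-simplex $\{z : \sum_a z_a \leq 1\}$, and the reparameterization to $(Z_0, Z_1, \ldots, Z_k)$ on $\Delta^k$ must be done carefully so that no mass is lost and the Dirichlet normalization emerges without a spurious prefactor. The manipulations themselves are routine, but the interpretation must be checked: the rescaling should reflect the DS epistemic act of discarding incompatible auxiliaries rather than a Bayesian-style posterior update, so that the resulting random set has the correct belief/plausibility semantics rather than merely the correct marginal law.
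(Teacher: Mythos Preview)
Your proposal is correct. In fact it is more complete than what the paper does in-text: the paper sets up the auxiliary variables $W_i^*$ and the compatibility constraint $W_i^*\le p_{a_i}$, but then simply observes that ``this constraint on the auxiliary variables is identical to the constraints used in theorem 4.1 of \citet{lawrence}'' and states the Dirichlet result without further computation. Your argument follows the same route---reduce the $n$ constraints to $k$ via the category-wise maxima $Z_a=\max_{i:a_i=a}W_i^*$, identify each $Z_a$ as $\mathrm{Beta}(n_a,1)$, and then condition on $\sum_a Z_a\le 1$ to obtain the Dirichlet density---but supplies the explicit density calculation that the paper outsources to Lawrence et al. So the approaches coincide in spirit, with yours being self-contained; the concern you flag about the conditioning step (discarding incompatible auxiliaries in the DS sense rather than performing a Bayesian update) is exactly the ``continuing to believe'' justification the paper invokes, and your treatment of it is adequate.
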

For those that prefer a more geometric perspective, one can think of $\hat{\mathcal{P}}$ as a polytope, specifically a simplex, with edges at:  
$$\bm{v}_j = (Z_1, ... Z_{j-1} , Z_{j} + Z_0, Z_{j+1}, ... Z_k) ,\,  \forall j \in \{1, ... k \}. $$
Any element in $\hat{\mathcal{P}}$ can be written as:
$$\mathcal{\hat{P}}(z) = \{z_{-0} + \bm{\theta}^T z_0 |\bm{\theta} = (\theta_1, ... , \theta_k) ,\,\sum_{j=1}^k \theta_j = 1 ,\, 0 \leq  \theta_j \leq 1 ,\, \text{ for } j = 1, ... ,k  \}$$
or in vector form:
$$\mathcal{\hat{P}}(z)  =  \begin{pmatrix}
Z_1 + \theta_1 Z_0 \\
\vdots \\
Z_k + \theta_k Z_0
\end{pmatrix}. $$

As the edges of the simplex is defined by a random variable, this of course makes the location of the simplex random a random variable. However, the behavior of this randomness is not without good properties as one can show that as desired, the location of our posterior random set converges to the true probabilities $\mathcal{P}$ in probability. 
\begin{theorem}
$\mathcal{\hat{P}}(\mathcal{Z})$ converges in probability to $\mathcal{P}$.
\label{thrm:1}
\end{theorem}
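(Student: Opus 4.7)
The plan is to establish Hausdorff-distance convergence of the random simplex $\hat{\mathcal{P}}(\mathcal{Z})$ to the point $\mathcal{P}$. Since $\mathcal{P}$ is a singleton, the Hausdorff distance reduces to $\sup_{q \in \hat{\mathcal{P}}(\mathcal{Z})} \|q - \mathcal{P}\|_\infty$, and using the explicit parametrization $q_j = Z_j + \theta_j Z_0$ with $\theta$ in the unit simplex, the triangle inequality gives
\[
|q_j - p_j| \;\leq\; |Z_j - p_j| + Z_0
\]
uniformly in $\theta$ and $j$. So it suffices to show, coordinatewise, that $Z_0 \to 0$ and $Z_j \to p_j$ in probability as $n \to \infty$.

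The first step is to exploit the marginal structure of the Dirichlet. Conditionally on the observed counts $(n_1, \ldots, n_k)$, we have $Z_0 \sim \mathrm{Beta}(1, n)$ with $n = \sum_j n_j$ the fixed sample size, and $Z_j \sim \mathrm{Beta}(n_j,\, 1+n-n_j)$. Standard Beta moments give $E[Z_0] = 1/(n+1)$ with variance of order $1/n^2$, and $E[Z_j \mid n_j] = n_j/(n+1)$ with conditional variance bounded by $1/(4(n+2))$. A conditional Chebyshev inequality therefore shows $Z_0 \to 0$ and $Z_j - n_j/(n+1) \to 0$ in probability.

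Next I would invoke the multinomial weak law of large numbers to conclude $n_j/n \to p_j$ in probability, whence $n_j/(n+1) \to p_j$. Combining with the previous step via a Slutsky-type argument yields $Z_j \to p_j$ in probability, and plugging back into the uniform bound, together with a union bound over the finite index set $\{0, 1, \ldots, k\}$, produces $\sup_{q \in \hat{\mathcal{P}}(\mathcal{Z})} \|q - \mathcal{P}\|_\infty \to 0$ in probability.

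The main obstacle, though more organizational than technical, is coordinating the two sources of randomness: the multinomial counts $(n_1, \ldots, n_k)$ and the Dirichlet vector $(Z_0, \ldots, Z_k)$ drawn conditionally on those counts. The cleanest route is to condition on $(n_1, \ldots, n_k)$ at the moment Chebyshev is applied, because the conditional variance bound $1/(4(n+2))$ does not depend on $n_j$; the inner step is thus uniform over the data, and the outer randomness in the counts is controlled by the standard LLN without further care. A minor bookkeeping point is selecting the right metric on random closed sets, but because the target is a single point and the random set is a compact simplex, Hausdorff convergence reduces exactly to the $\sup$ bound handled above.
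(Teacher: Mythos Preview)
Your proposal is correct and follows essentially the same route as the paper: reduce set convergence to the coordinatewise statements $Z_0 \to 0$ and $Z_j \to p_j$ in probability, use the Beta marginals of the Dirichlet together with Chebyshev bounds on their variances, and invoke the multinomial law of large numbers for $n_j/n \to p_j$. Your write-up is in fact more careful than the paper's in separating the two layers of randomness (the counts and the conditional Dirichlet) and in making the set-to-point convergence precise via the Hausdorff distance, whereas the paper simply asserts that it suffices for the vertices to converge and then applies the same Chebyshev-plus-LLN argument in a more compressed form.
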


\begin{proof}
    It suffices to show that the edges of the simplex converge to $\mathcal{P}$. By construction of $\bm{v}_j$, this occurs when $Z_0 \rightarrow^{P} 0$ and $Z_i \rightarrow^{P} p_i$ for $i \in \{1, ..., k\}$. 
    \begin{align*}
        \lim_{n \rightarrow \infty} P(|(Z_0,Z_1, ..., Z_k) - (0,p_1, ..., p_k)|) & \leq  \lim_{n \to \infty} P(|Z_0| \geq \frac{\epsilon}{2k} ) +  \sum_{i=1}^k  \lim_{n \to \infty} P(|Z_0 - p_i| \geq \frac{\epsilon}{2k} ) \\
        & \leq \lim_{n \to \infty} \frac{2k(1-1/n)}{n ( 1+n) \epsilon} + \sum_{i=1}^k \lim_{n \to \infty} \frac{\frac{n_i}{n} (1-  \frac{n_i}{n})}{n+1}  \frac{2 k }{\epsilon}  \\
        & = \sum_{i=1}^k \lim_{n \to \infty} \frac{\frac{n_i}{n} (1-  \frac{n_i}{n})}{n+1}  \frac{2 k }{\epsilon} \\
        & = \sum_{i=1}^k \lim_{n \to \infty} \frac{p_i (1-  p_i)}{n+1}  \frac{2 k }{\epsilon} =0
    \end{align*}
\end{proof}

\section{Point Estimation with DS Inference}
Before we describe hypothesis testing, we must describe how we will be turing our random polytopes $\mathcal{\hat{P}}(z)$, into point estimates. Recall that $\mathcal{\hat{P}}(z)$ is a polytope that describes the region of feasible parameter estimates. To emphasize this fact, we will denote $\mathcal{\hat{P}}(z)$ using $\Delta(z)$ and $\mathcal{\hat{P}}(Z)$ using $\Delta(Z)$. Using a test statistic $T(\mathcal{\hat{P}}, \mathcal{P}_0)$ and fixed $(n_1, ... , n_k)$, one can compute an \textbf{upper},  \textbf{mean}, and \textbf{lower test statistic} for each $\Delta(z)$ as follows:
\begin{align}
  T_{upper}(z) &=  \sup_{\mathcal{\hat{P}} \in \Delta(z) }  T(\mathcal{\hat{P}}, \mathcal{P}_{Observed}) \\
  T_{mean}(z)  &=   T(E_{ \Delta(z) } \mathcal{\hat{P}},\mathcal{P}_{Observed}) \\
  T_{lower}(z) &=  \inf_{\mathcal{\hat{P}} \in \Delta(z) } T(\mathcal{\hat{P}}, \mathcal{P}_{Observed}).
\end{align}
$E_{\Delta(z)}$ in this case represents taking the expectation with respect to the uniform measure upon $\Delta(z)$. 
\\\\
By plugging in the random variable $\mathcal{Z}$, this turns the upper, mean and lower statistics into \textbf{upper}, \textbf{mean}, and \textbf{lower sampling distributions}:
\begin{align}
  T_{upper} &=   \sup_{\mathcal{\hat{P}} \in \Delta(\mathcal{Z}) }  T(\mathcal{\hat{P}}, \mathcal{P}_0)  \\
  T_{mean}   &=   T(E_{ \Delta(\mathcal{Z}) } \mathcal{\hat{P}}, \mathcal{P}_0)   \\
  T_{lower}  &=  \inf_{\mathcal{\hat{P}} \in \Delta(\mathcal{Z}) }  T(\mathcal{\hat{P}}, \mathcal{P}_0).
\end{align}
Assuming that $T$ defines a valid distance metric, $T_{upper}(z)$ represents the largest possible test statistic one could create on $\Delta(z)$, $T_{lower}(z)$ represents the smallest, and $T_{mean}$ represents the average case. 

\subsection{Comparison to other Frequentist point estimation}
The creation of the upper, mean, and lower test statistics contrasts with the frequentist hypothesis testing paradigm. Under a frequentist paradigm, one derives a point estimate ,$\mathcal{\hat{P}}_{Freq}$,  using the maximum likelihood equation, and evaluates the asymptotic distribution of the point estimate plugged into the test statistic:
$$T_{Freq} = T(\mathcal{\hat{P}}_{Freq}, \mathcal{P}_0) $$
where $\mathcal{P}_0$ is the parameter value for the null hypothesis. In the case of a perason chi-squared hypothesis test, $\hat{\mathcal{P}}$ would be the  estimated proportions, ${\mathcal{P}}_0$ would be the proportions under the null hypothesis, and $T = \sum_i \frac{n_i \hat{p}_i - n_i p_{0i} }{ n_i p_{0i} } $. It turns out, one can connect the frequentist test statistic with that of the DS through the choice of summary statistic used on the posterior random set.

\begin{theorem}
For fixed observations $(n_1, ... , n_k)$ from a multinomial distribution and test statistic T, $T_{mean}$ is equal to $T_{Freq} = T(\hat{\mathcal{P}}_{Freq}, \mathcal{P}_0)$ where $\mathcal{\hat{P}}_{Freq} = (\frac{n_1 +1/k}{n+k}, ... , \frac{n_k +1/k}{n+k})$
\end{theorem}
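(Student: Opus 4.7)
The plan is to reduce the claim to a direct moment calculation: I aim to show that the expectation of $\mathcal{\hat{P}}$, taken jointly over $\mathcal{Z} \sim \text{Dirichlet}(1, n_1, \ldots, n_k)$ and the uniform measure on the polytope $\Delta(\mathcal{Z})$, equals $\hat{\mathcal{P}}_{Freq}$. Once this identity is established, the conclusion $T_{mean} = T(E\mathcal{\hat{P}}, \mathcal{P}_0) = T(\hat{\mathcal{P}}_{Freq}, \mathcal{P}_0) = T_{Freq}$ follows immediately from the definition of $T_{mean}$.

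First I would exploit the explicit parametrization $\mathcal{\hat{P}}(\mathcal{Z}) = \mathcal{Z}_{-0} + \bm{\theta}^{T} Z_0$ introduced above, where $\bm{\theta}$ is uniformly distributed on the $(k-1)$-dimensional probability simplex and is independent of $\mathcal{Z}$. Since the centroid of the uniform measure on this simplex is $(1/k, \ldots, 1/k)$, integrating first over $\Delta(z)$ at a fixed realization $z$ collapses the contribution of $\bm{\theta}$ to a deterministic vector, yielding $E_{\Delta(z)}\mathcal{\hat{P}} = z_{-0} + (1/k, \ldots, 1/k)^{T} z_0$.

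Second, I would integrate this expression against the Dirichlet law of $\mathcal{Z}$ using the standard marginals $E[Z_j] = n_j/(n+1)$ for $j \geq 1$ and $E[Z_0] = 1/(n+1)$, which follow from Theorem~\ref{eq:prs1} with total concentration $1 + \sum_j n_j = n+1$. Coordinate-wise this gives $E[\mathcal{\hat{P}}]_j = (n_j + 1/k)/(n+1)$, which is the claimed $\hat{\mathcal{P}}_{Freq}$ once the normalizing denominator is reconciled with the simplex constraint $\sum_j \hat{p}_{\text{Freq},j} = 1$ (this forces the denominator to equal the total Dirichlet mass $n+1$).

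The argument is a short centroid-plus-Dirichlet-moment calculation, so I do not anticipate a genuinely hard step. The one conceptual point that needs care is the independence of $\bm{\theta}$ from $\mathcal{Z}$, which is what justifies factoring $E_{\Delta(\mathcal{Z})}\mathcal{\hat{P}}$ as $E_{\mathcal{Z}}\bigl[\mathcal{Z}_{-0} + E[\bm{\theta}]^{T} Z_0\bigr]$; without this, the substitution of the centroid $(1/k, \ldots, 1/k)$ for $\bm{\theta}$ would not be valid inside the outer expectation. Everything else is routine Dirichlet bookkeeping, so I would present the computation in one display and then invoke the definition of $T_{mean}$ to conclude.
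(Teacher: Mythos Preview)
Your proposal is correct and follows essentially the same route as the paper: reduce to showing $E_{\Delta(\mathcal{Z})}\hat{\mathcal{P}} = \hat{\mathcal{P}}_{Freq}$, compute the inner expectation over the polytope by replacing $\bm{\theta}$ with the simplex centroid $(1/k,\ldots,1/k)$, then take Dirichlet moments $E[Z_i]=n_i/(n+1)$ and $E[Z_0]=1/(n+1)$ to obtain $(n_i+1/k)/(n+1)$. You in fact go slightly beyond the paper by flagging the independence of $\bm{\theta}$ and $\mathcal{Z}$ as the justification for the iterated expectation, and by correctly noting that the denominator in the statement must be $n+1$ (not $n+k$) for $\hat{\mathcal{P}}_{Freq}$ to lie on the simplex --- the paper's own proof silently uses $n+1$.
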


\begin{proof}
Since:
\begin{align*}
    T_{mean} &=  T(E_{\Delta(Z)}\hat{\mathcal{P}}, \mathcal{P}_0) \\
    T_{Freq} &= T(\hat{\mathcal{P}}_{Freq}, \mathcal{P}_0)
\end{align*}
and $\mathcal{P}_0$ is fixed, it suffices to show that:
$$ E_{\Delta(Z)} \hat{\mathcal{P}} = \hat{P}_{Freq}.$$
For each component of $i \in \{1, ... k \}$, note that we can write:
\begin{align*}
    E_{\Delta(Z)}(\hat{p}_i) &= E_{Z} E_{\Delta(z)}[ (\hat{p}_i)] \\
    &=E_{Z} E_{\Delta(z)} [ z_{-0} + \theta_i z_0] \\
    &=E_{Z}(z_{-0}) + \frac{1}{k} E_{Z}(z_0) \\
    &= \frac{n_i + 1/k}{n+1}
\end{align*}
\end{proof}
Note that at this point, $T_{mean}$ is the Bayes estimator under MSE of when the $(p_1, ... , p_k)$ has a prior distribution of Dirichlet($1/k, ... , 1/k)$ \cite{LehmCase98}. This illustrates that the frequentist approach of choosing an arbitrary point estimate is equivalent to choosing an arbitrary way to summarize the posterior random sets $\mathcal{\hat{P}}(Z)$. To the best of our knowledge, we are unaware of other results which connect DS to frequentist via the choice of summmarization.

\section{Hypothesis Testing with DS Inference}
\label{sec:hypo}
With our upper, lower, and mean point estimates and their distributions defined, we are ready to define how we perform a level $\alpha$-hypothesis test with our DS approach. 

\begin{definition}{Level $\alpha$ two-sided DS Hypothesis Test}\\
Let us have two competing hypotheses $H_0$ and $H_1$ about $k$-dimensional parameter $\mathcal{P}_0 \in \Delta^{k-1}$  such that:
$$H_0:\mathcal{P} = \mathcal{P}_0, H_1: \mathcal{P} \not = \mathcal{P}_0.$$
Using $\mathcal{P}_0$ and a point estimate, $\mathcal{\hat{P}}_{observed}$ based on fixed counts $(n_1, ... , n_k)$, we compute tail probabilities for the upper and lower distributions:
\begin{align}
  \pi_{upper} = P(T_{lower} &\geq T(\hat{P}_{observed},\mathcal{P}_0)) \\
  \pi_{lower} = P(T_{upper} &\geq T(\hat{P}_{observed},\mathcal{P}_0)). 
\end{align}
We then conclude based on the tail probabilities:\\
$$
\begin{cases}
\text{Reject } H_0 &  \text{ If } \pi_{upper} \leq \alpha  \\
\text{Accept } H_0 &  \text{ If } \pi_{lower} > \alpha  \\
\text{Unknown}  &  \text{ If } \pi_{lower} \leq \alpha \text{ but }  \pi_{upper} > \alpha \\
\end{cases}
$$
\end{definition}

\begin{exmp}
As an example, let us perform a DS goodness-of-fit test using the chi-squared test statistic. The data comes from a Multinomial (n, $(\frac{1}{3}, \frac{1}{3}, \frac{1}{3})$ and the null hypothesis is that all three classes are equally likely. The number of elements observed in each class is $(n_1 ,n_2,n_3) =(3,2,5)$. In such a case, our upper and lower test statistics are based on the chi-squared statistic:
$$T_{upper}(\mathcal{Z}) = \sup_{(\hat{p}_1, \hat{p}_2, \hat{p}_3) \in \Delta(\mathcal{Z})}\sum_{i=1}^3 \frac{(10\hat{p}_i - 10/3 )^2}{10/3}$$
$$T_{lower}(\mathcal{Z}) = \inf_{(\hat{p}_1, \hat{p}_2, \hat{p}_3) \in \Delta(\mathcal{Z})}\sum_{i=1}^3 \frac{(10\hat{p}_i - 10/3)^2}{10/3}$$
where elements of $\Delta(\mathcal{Z})$ are:
$$\mathcal{\hat{P} ({\mathcal{Z})}} = \begin{pmatrix}
\hat{p}_1 \\
\hat{p}_2 \\
\hat{p}_3 \\
\end{pmatrix} = \begin{pmatrix}
Z_1 + \theta_1 Z_0 \\
Z_2 + \theta_2 Z_0 \\
Z_3 + \theta_3 Z_0
\end{pmatrix}   $$
for fixed $\theta_1 + \theta_2 + \theta_3 = 1$ and $0 \leq \theta_i \leq 1$ for $i \in \{1,2,3\}$ and:
$$(Z_0,Z_1,Z_2, Z_3) \sim Dirichlet(1,3,2,5).  $$
In Figure \ref{fig:fig1}, we plot out 100 realizations of posterior random sets as well as the the null point, $\mathcal{P}_0 = (\frac{1}{3}, \frac{1}{3}, \frac{1}{3})$ and point estimate $(\frac{n_1 + 1}{n+3}, \frac{n_2 + 1}{n+3}, \frac{n_3 + 1}{n+3}) = (\frac{4}{13}, \frac{3}{13}, \frac{6}{13})$. Note how most of the polytopes are clustered around the point estimate which is some distance from the null point.

\begin{figure}[H]
    \centering
    \includegraphics[scale = 0.15]{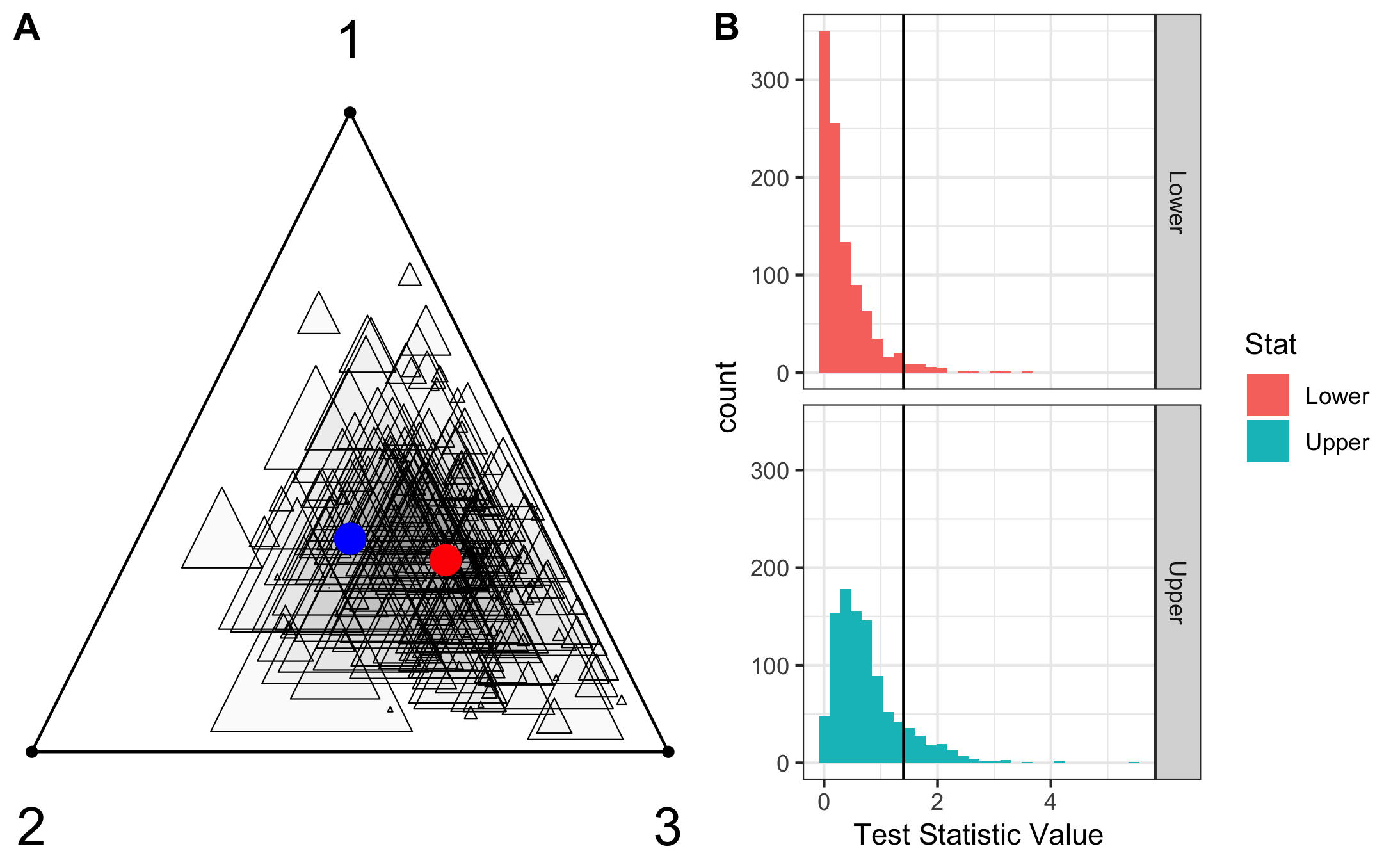}
    \caption[Dempster Polytopes for n = 10]{\textbf{A)} 100 randomly chosen posterior random sets from a 3 dimensional test of uniformity. Note how the polytopes are centered around the point estimate (red). 100 was chosen for visibility reasons. \textbf{B)} 1000 simulations of the upper and lower test statistic and a vertical line representing $\mathcal{P}_0$. For this data,  $\pi_{lower}$ is 0.12 and $\pi_{upper}$ is 0.034.  }
    \label{fig:fig1}
\end{figure}
In the Figure \ref{fig:fig2}, we have summarized 1000 posterior random sets into their upper and lower chi-squared test distributions. As is, the lower tail probability is 0.034 while the upper tail probability is 0.12, so at the $\alpha = 0.05$ level, we conclude that we lack the power to make a conclusion either way. More precisely, if we performed a hypothesis test using our most optimistic upper test statistic, we would reject while our conservative lower test statistic would fail to reject. Having posterior estimates leading to opposite conclusions is not a sign of reliability so we conclude that we cannot make a conclusion either way. 
\\\\
Now, if we increase our sample size from n=10 to n=100 and observe $(n_1, n_2, n_3) = (30,20,50)$, we now have much more evidence that the null hypothesis of each class being equally likely is incorrect. Note that our posterior random sets are now defined by:
$$(Z_0,Z_1,Z_2, Z_3) \sim Dirichlet(1,30,20,50). $$
The width of each random set is $Z_0 \sim Beta(1, 100)$, rather than $Z_0 \sim Beta(1, 10)$ as in the previous example. Correspondingly, in Figure \ref{fig:fig2}, we can see in part A) that the polytopes are smaller and more clustered around our point estimate. This corresponds with our intuition that a larger sample size should result in more concentrated posterior estimates. Now none of the 1000 posterior random sets are close to the null point. Thus, $\pi_{lower}$ and $\pi_{upper}$ are both $<0.001$ meaning that we reject our null hypothesis with confidence.
\begin{figure}[H]
    \centering
    \includegraphics[scale = 0.15]{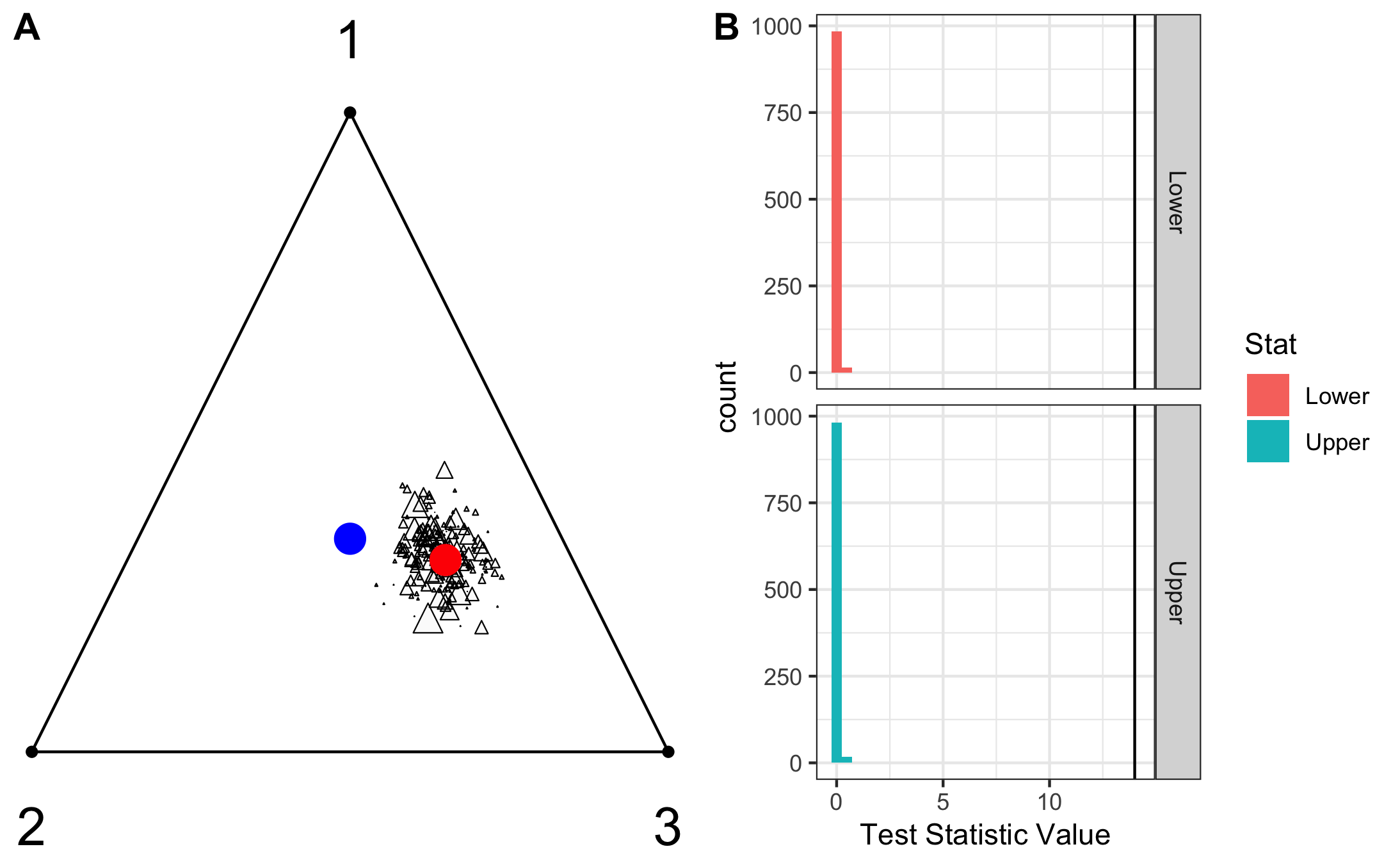}
    \caption[Dempster Polytopes for n = 100]{\textbf{A)} 100 randomly chosen posterior random sets. Note how the polytopes are more tightly centered around the point estimate than in Figure \ref{fig:fig1}.  \textbf{B)} 1000 simulations of the upper and lower test statistic and a vertical line representing $\mathcal{P}_0$. With the larger sample size, both $\pi_{lower}$ and $\pi_{upper}$ are $< 0.0001$.}
    \label{fig:fig2}
\end{figure}
As we saw in the previous example, increasing our sample size made us go from an uncertain conclusion to a certain one. This turns out to be not only a feature of the previous example, but holds for any continuous, bounded test statistic.

\end{exmp}

\begin{theorem}
Consider a k-dimensional DS hypothesis as described above based on a continuous and bounded test statistics, $T({\mathcal{\hat{P}}}, \mathcal{P}_0)$. As the sample size increases, the probability of a DS hypothesis test returning an unknown result goes to 0 in probability.
\end{theorem}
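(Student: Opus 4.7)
The plan is to reduce the ``unknown'' event---namely, the event that the two tail probabilities $\pi_{\text{upper}}$ and $\pi_{\text{lower}}$ straddle the level $\alpha$---to two asymptotically negligible pieces: first, the probability that $\pi_{\text{upper}}$ and $\pi_{\text{lower}}$ differ by more than $\epsilon$, and second, the probability that one of them lies in an $\epsilon$-neighborhood of $\alpha$. Sending $n \to \infty$ first and then $\epsilon \downarrow 0$ would then finish the argument.

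The main engine is the collapse of the posterior polytope. By the Dirichlet construction, $Z_0 \sim \text{Beta}(1, n)$, so $Z_0 \xrightarrow{P} 0$, and the diameter of $\Delta(\mathcal{Z})$ is at most a constant times $Z_0$. Since $T(\cdot, \mathcal{P}_0)$ is continuous and bounded on the compact simplex $\Delta^{k-1}$, it is uniformly continuous there, and so
\begin{align*}
T_{\text{upper}}(\mathcal{Z}) - T_{\text{lower}}(\mathcal{Z}) \leq \omega_T\bigl( \mathrm{diam}(\Delta(\mathcal{Z})) \bigr) \xrightarrow{P} 0,
\end{align*}
where $\omega_T$ is the modulus of continuity of $T$. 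Writing $t = T(\hat{P}_{\text{observed}}, \mathcal{P}_0)$, for any $\delta > 0$,
\begin{align*}
|\pi_{\text{upper}} - \pi_{\text{lower}}| = P_{\mathcal{Z}}(T_{\text{lower}} < t \leq T_{\text{upper}}) \leq P_{\mathcal{Z}}(T_{\text{upper}} - T_{\text{lower}} > \delta) + P_{\mathcal{Z}}(T_{\text{lower}} \in (t-\delta, t]),
\end{align*}
and the first summand is $o_P(1)$. The second term requires a scale-separation argument: conditional on the data, the law of $T_{\text{lower}}(\mathcal{Z})$ is driven by Dirichlet fluctuations of scale $1/\sqrt{n}$ around a center close to $t$, while $\delta$ only needs to be of order $\omega_T(1/n)$, so the window captures at most an $O(1/\sqrt{n})$ share of the mass and the whole expression is $o_P(1)$.

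For the second piece,
\begin{align*}
P(\text{unknown}) \leq P\bigl( |\pi_{\text{upper}} - \pi_{\text{lower}}| > \epsilon \bigr) + P\bigl( \pi_{\text{upper}} \in (\alpha, \alpha + \epsilon] \bigr)
\end{align*}
for any $\epsilon > 0$; the first term vanishes by the previous paragraph, and the second goes to zero as $\epsilon \downarrow 0$ provided the limiting law of $\pi_{\text{upper}}$ is atomless at $\alpha$. This holds under the alternative, where Theorem \ref{thrm:1} gives $\mathcal{\hat{P}}(\mathcal{Z}) \xrightarrow{P} \mathcal{P} \neq \mathcal{P}_0$ and therefore $\pi_{\text{upper}}$ concentrates at a point bounded away from $\alpha$, and under the null, where a Dirichlet central limit theorem renders the limit absolutely continuous. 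The hardest step will be the scale-separation bound: continuity and boundedness of $T$ alone do not directly deliver a uniform bound on the conditional density of $T_{\text{lower}}(\mathcal{Z})$ near $t$, so the cleanest remedy is a first-order delta-method expansion of $T(\mathcal{Z}_{-0} + \bm{\theta}^T Z_0, \mathcal{P}_0)$ about $\hat{p}$ together with the Dirichlet central limit theorem, yielding an asymptotically Gaussian conditional law whose density is controlled.
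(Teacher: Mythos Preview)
Your approach is considerably more careful than the paper's. The paper's proof is short and heuristic: it argues that both $T_{\text{upper}}(\mathcal{Z})$ and $T_{\text{lower}}(\mathcal{Z})$ converge in probability to the constant $T(\mathcal{P},\mathcal{P}_0)$ (via the collapse of $\Delta(\mathcal{Z})$ to $\mathcal{P}$), and from this simply asserts that $\pi_{\text{upper}} \to \pi_{\text{lower}}$, hence the ``unknown'' event becomes impossible. It does not engage with the difficulty you correctly isolate---namely, that the observed threshold $t = T(\hat{P}_{\text{observed}},\mathcal{P}_0)$ converges to the \emph{same} constant, so one is comparing tail probabilities at a moving point that sits right in the bulk of the shrinking distributions, and $T_{\text{upper}} - T_{\text{lower}} \to 0$ does not by itself imply $P_{\mathcal{Z}}(T_{\text{lower}} < t \le T_{\text{upper}}) \to 0$.

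Your two-piece decomposition and the scale-separation idea (the $O_P(1/n)$ width of the polytope beats the $O(1/\sqrt{n})$ Dirichlet fluctuations) is the right mechanism to close that gap, and your second piece about atomlessness of the limit law of $\pi_{\text{upper}}$ at $\alpha$ is a genuine additional ingredient that the paper never mentions. The price is that your route needs more than ``continuous and bounded'': the delta-method step requires differentiability (or at least Lipschitz) so that $\omega_T(1/n) = o(n^{-1/2})$, and the atomlessness argument needs a nondegenerate Gaussian limit. So what the paper's argument buys is brevity at the cost of rigor; what yours buys is an honest proof at the cost of slightly stronger (but in practice always satisfied) hypotheses on $T$. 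You should flag explicitly that the statement as written---with only continuity and boundedness---is not what your argument actually delivers.
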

\begin{proof}
We first desire to show that $T_{upper} \overset{P}{\to} T(\mathcal{P},\mathcal{P}_0)$. This requires that:
\begin{align*}
    \lim_{n \to \infty} P(| \sup_{\hat{p} \in \Delta(\mathcal{Z})} T(|\hat{P}, \mathcal{P}_0) - T(\mathcal{P},\mathcal{P}_0| \geq \epsilon)  =0
\end{align*}
Since $\Delta(\mathcal{Z}) \sim Dirichlet(1, n_1, ..., n_k)$, we have by the strong law that $\Delta(\mathcal{Z})$ converges to a degenerate distribution centered at $\mathcal{P}$. If we let $A_n  = \{ \hat{p}: T(\hat{p},\mathcal{P}_0) \geq T(\mathcal{P},\mathcal{P}_0) \}$ $A_n$ converges point wise to $\mathcal{P}$. Thus, by monotone convergence, we get that:
\begin{align*}
& \lim_{n \to \infty} P(| \sup_{\hat{p} \in \Delta(\mathcal{Z})} T(|\hat{P}, \mathcal{P}_0) - T(\mathcal{P},\mathcal{P}_0)| \geq \epsilon) \\
    &= \lim_{n \to \infty} P(| \sup_{\hat{p} \in A_n} T(|\hat{P}, \mathcal{P}_0) - T(\mathcal{P},\mathcal{P}_0)| \geq \epsilon)  \\
    &=  P(| T(\mathcal{P}, \mathcal{P}_0) - T(\mathcal{P},\mathcal{P}_0)| \geq \epsilon) =0
\end{align*}
By the same argument, one can show that $T_{lower} \overset{P}{\to} T(\mathcal{P}, \mathcal{P}_0)  $.\\\\
To finish the proof, recall that we return an ``unknown" conclusion when $\pi_{upper} \leq \alpha$ but $\pi_{lower} > \alpha$. However since $T_{upper}$ and $T_{lower}$ are converging in probability to $T(\mathcal{P},\mathcal{P}_0)$, $\pi_{upper} \rightarrow \pi_{lower}$. And it becomes possible to return an "unknown" result.
\end{proof}

\section{Interpretations and Connections to Other Forms of Inference}
Now that we have defined what we claim to be a DS hypothesis test, we show how this test fits into the DS, Fiducial, and frequentist paradigm. 

\subsection{DS Interpretation}

To see how this procedure fits into the DS framework, we will first abstract this procedure in the spirit of Fraser's structural inference \cite{fraser} and illustrate how our tail probabilities define proper DS upper and lower probabilities. \\\\
First, let us write our data generation scheme in terms of a Fraser inspired \textit{randomized structural equation}:
$$\mathcal{X} = G(\mathcal{P}, \mathcal{U}|\pi). $$
here $G$ is our deterministic data generating equation, $\mathcal{P}$ are our parameters of interest, $\mathcal{U}$ is the random component and $\pi$ is an unknown permutation. Unlike Fraser's structural equation framework, we will not be assuming a group structure and will allow for aspects of the data generating equation to contain randomized components via $\pi$. We will further expand upon the implications of this randomized data generating equation in the Fiducial Inference interpretation section. \\\\
Using this, we can define the subset of dependent uniform random variables $\mathcal{U} = (U_1, ... U_n) $ that could have generated $(n_1, ... n_n)$ for some $\mathcal{P} = (p_1, ... p_k)$ as:
$$\mathcal{R}_x = \{ \mathcal{U} \in [0,1]^n : \exists \mathcal{P}  \text{ and } \exists \pi \in \mathcal{S}(\{1, ... ,k\}) \text{ st } (n_1, ... n_n) = G(\mathcal{P}, \mathcal{\textbf{u}}| \pi)   \} .$$
Here, $\mathcal{S}(\{1, ... ,k\})$ is 
the set of all permutations, $\pi$ on the set $\{1, ... ,k\}$ which is. Given $\textbf{u} \in \mathcal{R}_x$ , there exists a non-empty ``feasible" set on the parameter space  $\mathcal{F}(\mathcal{U})$ that could have been used to generate the data:
$$\mathcal{F}(\mathcal{U}) = \{ \mathcal{P} : (n_1, ... n_k) = G(\mathcal{P}, \textbf{u} | \pi ), \exists \pi \in \mathcal{S}_{\{1, .... , k\}} \}. $$
In the case of \textit{Dirichlet-DSM},  $\mathcal{F}(\mathcal{U})$ has a very simple expression. $\mathcal{F}(\mathcal{U})$ is directly equal to our previously described polytopes with Dirichlet vertices, which we have been calling $\Delta(Z)$. Furthermore, as we can see from our definition of $\Delta(Z)$, there is no dependence on the permutation $\pi$ that was used to generate the data. The ability to decouple class randomization from inference is one of the strengths of the method.
We now introduce how these random sets yield valid DS upper and lower probabilities. Consider a continuous test statistic $T(\hat{\mathcal{P}}, \mathcal{P}_0)$ which defines a measurable map $T(\cdot, \mathcal{P}_0):(\Delta^{k-1},\mathcal{M}_{\Delta ^{k-1}}) \rightarrow (\Omega, \mathcal{M}_{\Omega})$, where $\mathcal{M}_{\Delta}^{k-1}$ is the usual Borel algebra on the $k-1$ dimensional simplex and $\Omega$ is a one-dimensional measure space, most commonly $\mathbb{R}$ or $\mathbb{N}$. Given a measurable posterior random set, $\Delta(Z)$, via continuity of $T$, $T(\Delta(Z))$ as well as $\sup T(\Delta(Z))$ and $\inf T(\Delta(Z))$ are all continuous measurable sets in $\mathcal{M}_{\Omega}$. To perform a hypothesis test, the user now provides a measurable set $\Sigma \subset \mathcal{M}_{\Omega}$ that corresponds to the hypothesis of interest. \\\\
For the previous example in Section \ref{sec:hypo}, we tested if the parameters were uniform, $(p_1 = p_2 = p_3 = 1/3)$ with chi-squared test statistic $T(\mathcal{P}, \mathcal{P}_0) = \sqrt{\frac{\sum_{i=1}^3 (np_i - n/3)^2}{n/3}}$. With the test of uniformity, our corresponding set of interest is: $\Sigma = \{ T(\mathcal{P}, \mathcal{P}_0) \leq T(\mathcal{\hat{P}}, \mathcal{P}_0)_{\alpha\%} \}$ where $T(\mathcal{\hat{P}}, \mathcal{P}_0)_{\alpha\%}$ is the $\alpha$ quantile of $T(\mathcal{\hat{P}}, \mathcal{P}_0)$. \\\\
DS theory presumes the existence of a mass function $m$ such that:
$$m:2^\Omega \rightarrow [0,1] $$
such that:
$$m(\emptyset) = 0 $$
and
$$\sum_{S \in 2^{\Omega}} m(S) = 1.$$
Any probability measure can be written as a mass function by letting every measurable set be equal to its probability and setting unmeasurable sets to 0:
$$m(A) = \begin{cases}
P(A) & \text{ if } A \in \mathcal{M}_{\Omega} \\
 0 & \text{ if } A \not \in \mathcal{M}_{\Omega}
\end{cases}. $$
Moreover, for any set $S \subset 2^{\Omega}$ (as opposed to any measurable set), the belief of a set S is defined as the mass of sets which are contained within S: 
$$bel(S) = \sum_{A|A \subset S} m(A).$$
The plausibility of S is defined as the mass of sets that intersect S:
$$pl(S) = \sum_{A|A \cap S \neq \emptyset}  m(A) .$$
Due to this construction:
$$bel(A) \leq P(A) \leq pl(A).$$
With the above definitions, we can now  show that $\pi_{upper}$ and $\pi_{lower}$ are valid belief and plausibility functions for the one-sided hypothesis test previously described:
\begin{theorem}
When $T$ defines a distance metric, $\pi_{upper}$ and $\pi_{lower}$ define valid belief and plausibility functions for $S$ where $S= [T(\mathcal{\hat{P}}, \mathcal{P} )_{\alpha \%}  \infty)$.
\end{theorem}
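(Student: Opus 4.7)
The plan is to realize the DS mass function as the pushforward of the Dirichlet law under the random-set map $\mathcal{Z} \mapsto T(\Delta(\mathcal{Z}))$, and then read the belief and plausibility of the half-line $S$ directly off the definitions of $T_{lower}$ and $T_{upper}$.

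First, I would observe that because $T$ is a distance metric (hence continuous) and $\Delta(\mathcal{Z})$ is a compact, convex (and therefore connected) simplex almost surely by Theorem~\ref{eq:prs1}, the image
$$R(\mathcal{Z}) := T(\Delta(\mathcal{Z}), \mathcal{P}_0) = [T_{lower}(\mathcal{Z}),\, T_{upper}(\mathcal{Z})] \;\subset\; \Omega$$
is a non-empty closed interval. Define the DS mass function $m$ on $2^{\Omega}$ as the pushforward of the Dirichlet measure through $\mathcal{Z} \mapsto R(\mathcal{Z})$; equivalently, $m$ is concentrated on the collection of intervals $\{[T_{lower}(z), T_{upper}(z)] : z \in \mathrm{supp}(\mathcal{Z})\}$. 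Since $\Delta(\mathcal{Z})$ is non-empty with probability one, $m(\emptyset) = 0$ and the total mass is one, so the mass-function axioms are met.

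Second, I would translate the belief and plausibility of $S = [c, \infty)$, with $c = T(\hat{\mathcal{P}}, \mathcal{P}_0)_{\alpha\%}$, into events on $\mathcal{Z}$. A closed interval $[\ell, u]$ is contained in $[c, \infty)$ if and only if $\ell \geq c$, and it meets $[c, \infty)$ if and only if $u \geq c$. Interpreting the sums in the definition of bel and pl as integrals against the pushforward law, we obtain
\begin{align*}
\mathrm{bel}(S) &= \int \mathbf{1}\{R(\mathcal{Z}) \subseteq S\}\, dP(\mathcal{Z}) = P\bigl(T_{lower}(\mathcal{Z}) \geq c\bigr) = \pi_{upper},\\
\mathrm{pl}(S) &= \int \mathbf{1}\{R(\mathcal{Z}) \cap S \neq \emptyset\}\, dP(\mathcal{Z}) = P\bigl(T_{upper}(\mathcal{Z}) \geq c\bigr) = \pi_{lower},
\end{align*}
which is exactly the claim.

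The main obstacle is bridging the countable-sum definition of $m$, bel, and pl given in the excerpt with the continuous setting at hand: the sums $\sum_{A \subseteq S} m(A)$ and $\sum_{A \cap S \neq \emptyset} m(A)$ must be reinterpreted as integrals against the pushforward law of the random set $R(\mathcal{Z})$, and one must verify that the events $\{R(\mathcal{Z}) \subseteq S\}$ and $\{R(\mathcal{Z}) \cap S \neq \emptyset\}$ are measurable. The latter reduces to measurability of $T_{lower}$ and $T_{upper}$ as functions of $\mathcal{Z}$, which follows from continuity of $T$ together with compactness of $\Delta(\mathcal{Z})$ via a standard measurable-selection / semicontinuity argument (the sup and inf of a continuous function over a compact random set that depends continuously on $\mathcal{Z}$ are Borel measurable). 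Once these regularity points are handled, the identification of $\pi_{upper}$ with $\mathrm{bel}(S)$ and of $\pi_{lower}$ with $\mathrm{pl}(S)$ is immediate.
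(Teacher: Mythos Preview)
Your proposal is correct and follows the same overall strategy as the paper: push the Dirichlet law forward through $\mathcal{Z}\mapsto T(\Delta(\mathcal{Z}))$ to obtain a mass function, then identify $\mathrm{bel}(S)$ and $\mathrm{pl}(S)$ with the containment and intersection probabilities, which are exactly $\pi_{upper}$ and $\pi_{lower}$. The one substantive difference is your choice of focal elements: you take the closed intervals $R(\mathcal{Z})=[T_{lower}(\mathcal{Z}),T_{upper}(\mathcal{Z})]$, whereas the paper works with half-lines of the form $[T_{upper}(Z),\infty)$ and invokes the counting measure somewhat informally. Your interval construction is cleaner because it makes both the containment condition ($T_{lower}\ge c$) and the intersection condition ($T_{upper}\ge c$) fall out simultaneously and match the paper's definitions of $\pi_{upper}$ and $\pi_{lower}$ on the nose; with half-line focal sets one of the two identities requires an extra step or a different family. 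Your explicit handling of measurability and of the passage from discrete sums to integrals against the pushforward also fills in regularity details that the paper leaves implicit.
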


\begin{proof}
First, since T defines a distance metric, if for a given $Z$:
$$T_{upper}(Z)  \leq T(\mathcal{\hat{P}}, \mathcal{P} )_{\alpha \%}  $$
this implies that:
$$ S=[T_{upper}(Z)_{\alpha } , \infty) \subset [T_{upper}(Z) , \infty).$$
Now, if we let $m$ be the counting measure:
\begin{align*}
    bel(S) &= \sum_{ [T_{upper}(Z), \infty) | [T_{upper}(Z), \infty) \subset S} m([T_{upper}(Z), \infty)) \\
    &= \sum_{T_{upper}(Z) \leq T(\mathcal{P}, \hat{\mathcal{P}}_{\alpha \%})} m([T_{upper}(Z), \infty))\\
    &= P(T_{upper} \leq T(\mathcal{P}, \mathcal{\hat{P}})_{\alpha \%,})\\
    &= \pi_{upper} 
\end{align*}
Likewise using the same argument, we can show that $pl(S) = \pi_{lower}$.
\end{proof}
Based on this theorem, we can view the belief function as the probability we observe a posterior set where \textit{every} element have a $T$ value less than the $\alpha \%$ cut-off. This belief function is what we have been referring to as our lower p-value. On the other hand, the plausibility function is the probability that we observe a posterior set where there exists \textit{at least} one element whose value $T$ is less than $\alpha$.
\subsection{Fiducial Interpretation}
In terms of Fiducial Inference, we will be showing that the previously defined "feasible" set $\Delta(Z)$ defines what is known as a Generalized Fiducial Quantity. Recall that given fixed latent parameters $z$ from a Dirichlet distribution, our inference procedure results in a polytope $\Delta(z)$ of parameter values for which there exist uniform random variables $(U_1, ... U_k)$ that could have generated the observed data. As above, we write the set of parameters that could have feasibly generated $(n_1, ..., n_k)$ as:
$$\mathcal{F}(\mathcal{U}) = \{ \mathcal{P} \in \Delta^k : (n_1, ... n_k) = G(\mathcal{P}, \textbf{u} | \pi ), \exists \pi \in \mathcal{S}_{\{1, .... , k\}} \}. $$
However, unlike DS, which directly works with the random posterior sets, $\mathcal{F}(\mathcal{U})$, in the Generalized Fiducial Scheme, when $\mathcal{F}(\mathcal{U})$ contains multiple elements, one selects an element according to some possible random rule V. Mathematically, we say that for any measurable set $S \in \Delta^k$ of feasible parameters, one selects a possibly random element $V(S)$ with support $\overline{S}$ where $\overline{S}$ in the closure of S. In our case, our random set is $\mathcal{F}(\mathcal{U})$  from which we choose the supremum, infimum, or average value of the polytope. As supremum, infimum, and average values are all random rules that lie within the closure, our upper, lower, and mean test distributions define valid Fiducial Generalized Quantities.

\section{The Advantages of DS Multinomial Hypothesis Testing}
We would now like to pivot to demonstrating features of our DS hypothesis testing procedure and the powerful properties though an investigation of DS's robust insight into testing.

\subsection{The Unknown class can help control Type I/Type II Error Rates}
Our DS inference procedure indicates if one should reject the null with confidence, fail to reject the null with confidence, or if we lack the information to conclude either way. If one restricts themselves to only the samples for which DS has given a certain conclusion, we can demonstrate that this extra confidence translates into a lower overall Type I/ Type II misclassification rate. To demonstrate, we perform a 4 dimensional test:
\begin{align*}
    H_0: & p = (0.25,0.25,0.25,0.25) \\
    H_1: & p \neq (0.25,0.25,0.25,0.25)
\end{align*}
where $(X_1, X_2, X_3, X_4) \sim Multinomial(n, (0.3,0.3,0.3,0.1))$ and $n$ is varied from 10 to 100. For each n, 250 samples are drawn and both a DS and a permutation-based chi-squared test are performed. From the chi-squared test, we will calculate the percent of tests that are correctly rejected, (Freq Percent Correct). And from the DS we will compute 2 values, first, the percent of total tests we correctly reject (DS Total Correct) and the percent of \textit{certain} tests we correctly reject (DS Certain Correct). From the results of the Figure \ref{fig:Certain1}, we can see that DS Certain Correct has the higher power, followed by DS Total Correct and then Freq Percent Correct. Moreover, using DS Certain Correct only comes with a small sacrifice in sample size, as only between 0.05-0.1 of the samples were determined to be uncertain with the number of uncertain tests intuitively going to 0 as sample size increase.
\begin{figure}[h]
    \centering
    \includegraphics[width = \textwidth]{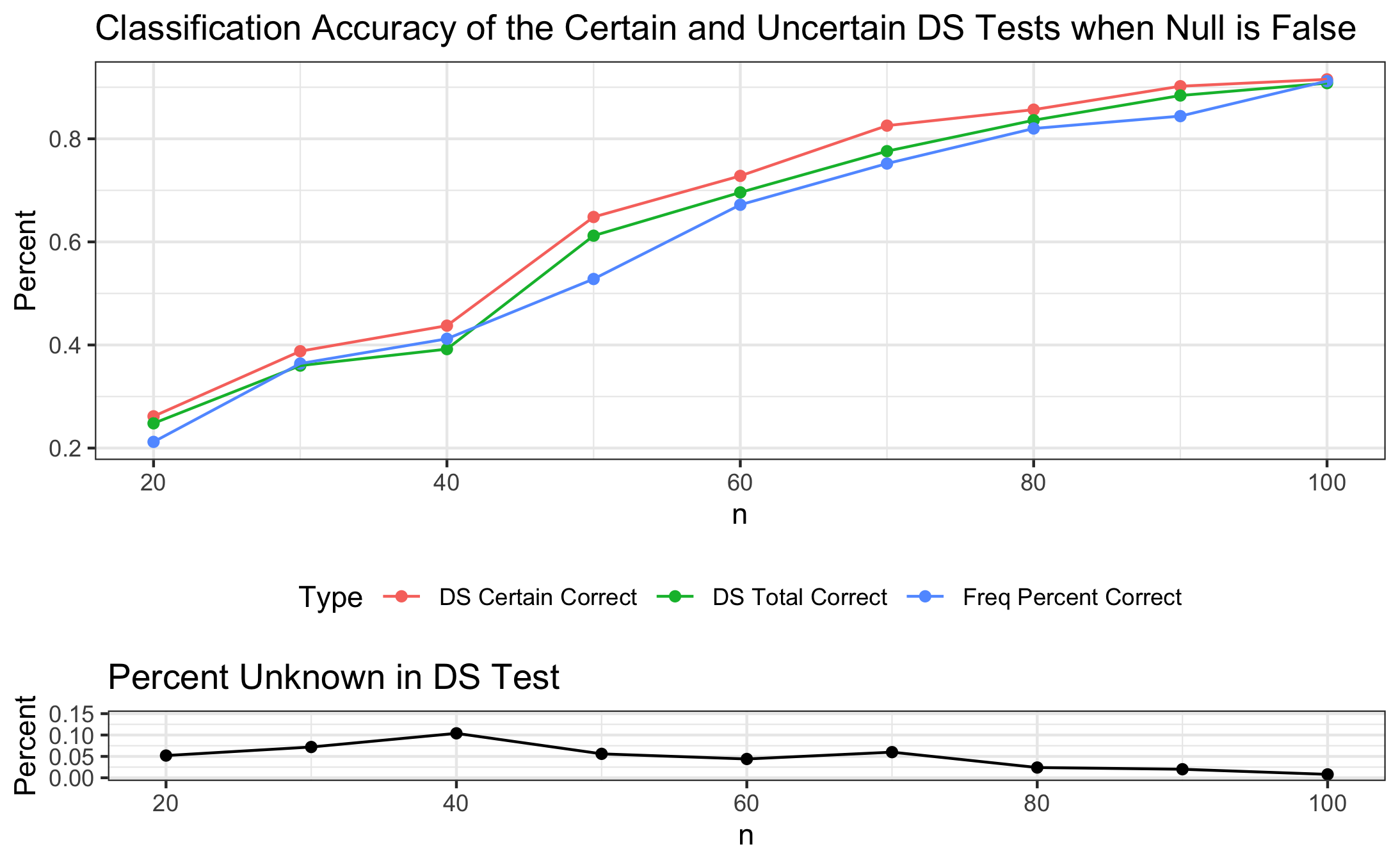}
    \label{fig:Certain1}
\end{figure}
To confirm that this increase in accuracy is not coming at the expense of preserving the 0.05 level, we perform the same simulation as before except this time the null is true. The results of this null simulation in Figure \ref{fig:Certain2} show that DS Certain Correct does indeed preserve a similar level to the frequentist test with the slightly lower level of DS Total Correct explainable by the uncertain classes. 

\begin{figure}[h]
    \centering
    \includegraphics[width = \textwidth]{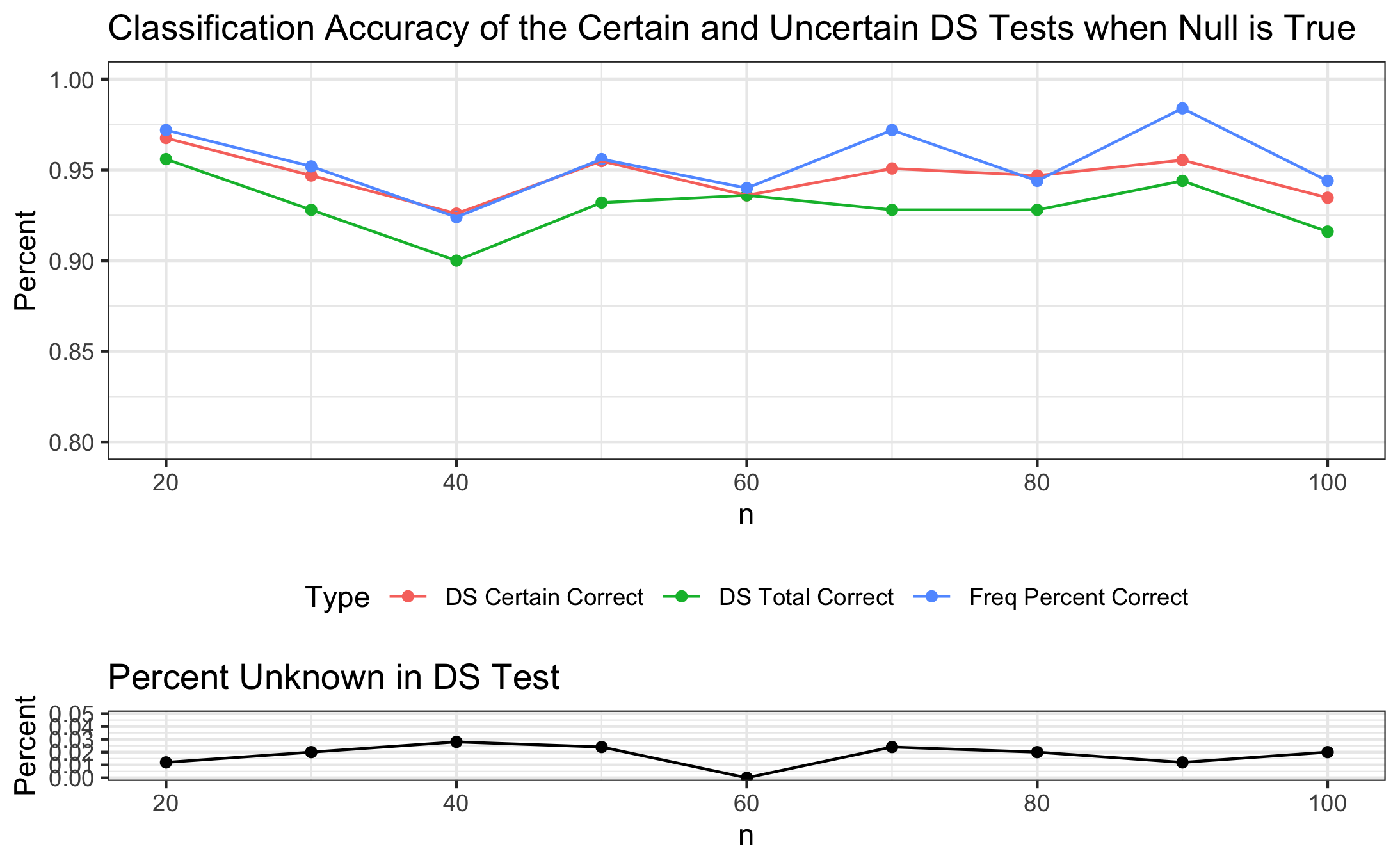}
    \label{fig:Certain2}
\end{figure}

\subsection{The Unknown Class gives insight into the difficulty of a test }
DS has a long history in the engineering community providing a way to comprehend the level of epistemic uncertainty in a hypothesis test through the inclusion of an ``unknown" option. In figure \ref{fig:Sample_Power}, we demonstrate the utility of this unknown class through a simulated 4-class Multinomial goodness-of-fit test. The alternate hypothesis had probabilities $(\frac{2}{6}, \frac{1}{6}, \frac{2}{6}, \frac{1}{6})$.  The test statistic used was the chi-square statistic and the sample size, $n$, ranged from 4 to 256. At each sample size, 1000 tests were performed and the upper and lower p-values were generated using 100 polytopes. The null distribution of the Frequentist test was simulated from 1000 draws from a multinomial distribution with equal probabilities. No multiple testing correction was performed. \\\\
In the first plot in Figure \ref{fig:Sample_Power}, we can see that while both tests reach the correct level of 0.05 as the sample size increases, they have very different behavior at low sample sizes. In the frequentist test, when the dimension-to-sample size ratio is one, the frequentist test rejects too few tests. This inability to reject is a well-documented high-dimensional phenomenon \cite{10.1214/18-AOAS1155SF} not just for multinomials, but for a large class of hypothesis tests \cite{wainwright_2019}. However, in the DS test, when the sample size is too small, there is a substantial possibility that the hypothesis test will return with an ``Uncertain" conclusion. Unlike the frequentist test which fails to reject nearly all, this unknown class is giving additional information to the user on the difficulty of making a conclusion at the current sample size. We see this additional information at work in Figure \ref{fig:Sample_PowerH1}. Once again, at low sample sizes, the frequentist test does not indicate if we are failing to reject the null because the null is true or because we lack the sample size to make any conclusions without performing a seperate power calculation. However, if we look at the DS test, the large probability of a test returning an ``Uncertain" result indicates that our issue is in the sample size. 
\begin{figure}
    \centering
    \includegraphics[width = 0.6\textwidth]{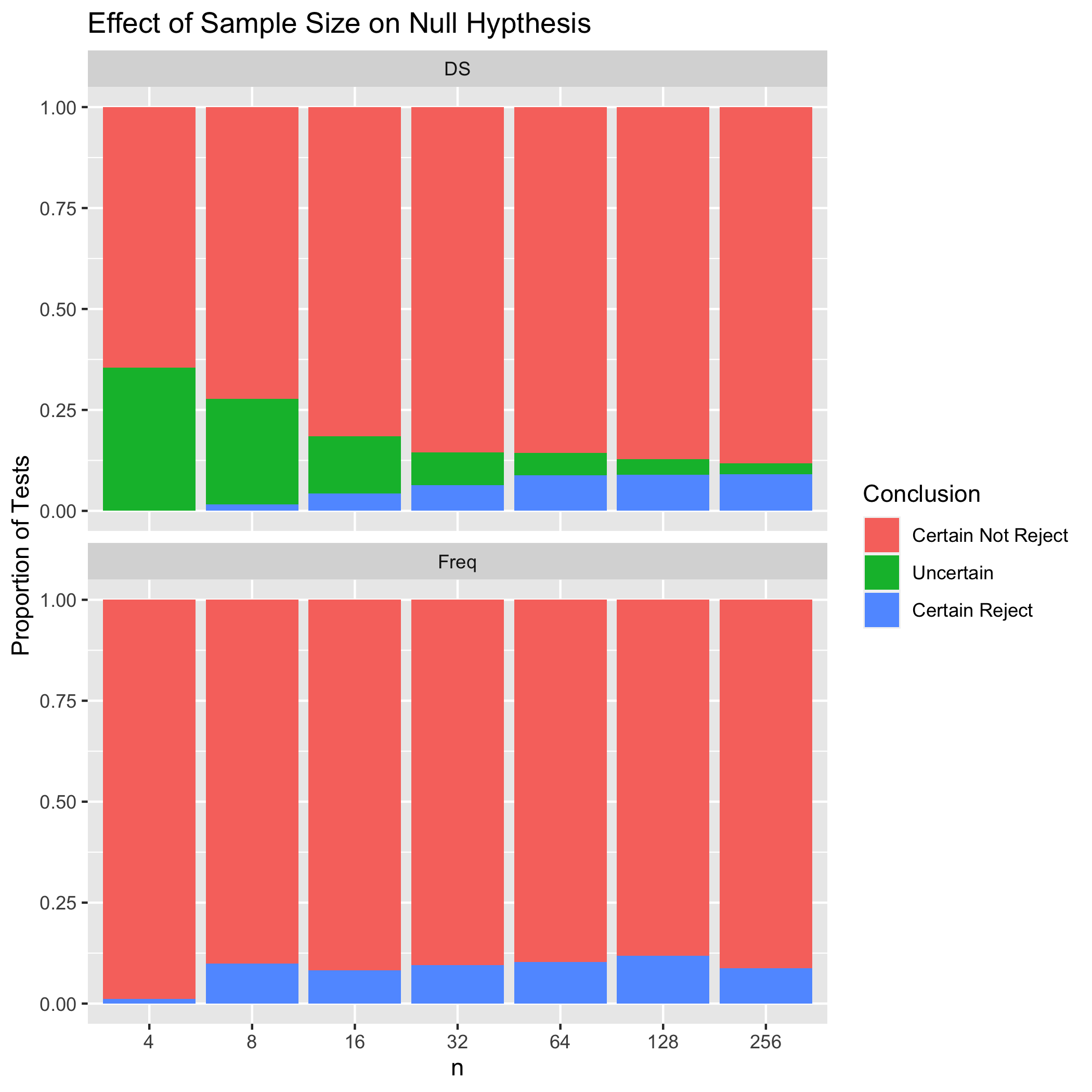}
    \caption[Power when null is true]{Results of Frequentist and DS hypothesis test of uniformity when the null is true. At low sample sizes, the Frequentist test gives a deceptively low rejection rate while the DS properly indicates the difficulty of this problem by having a large probability for an ``Uncertain" result.  }
    \label{fig:Sample_Power}
\end{figure}

\begin{figure}
    \centering
    \includegraphics[width =  0.6\textwidth]{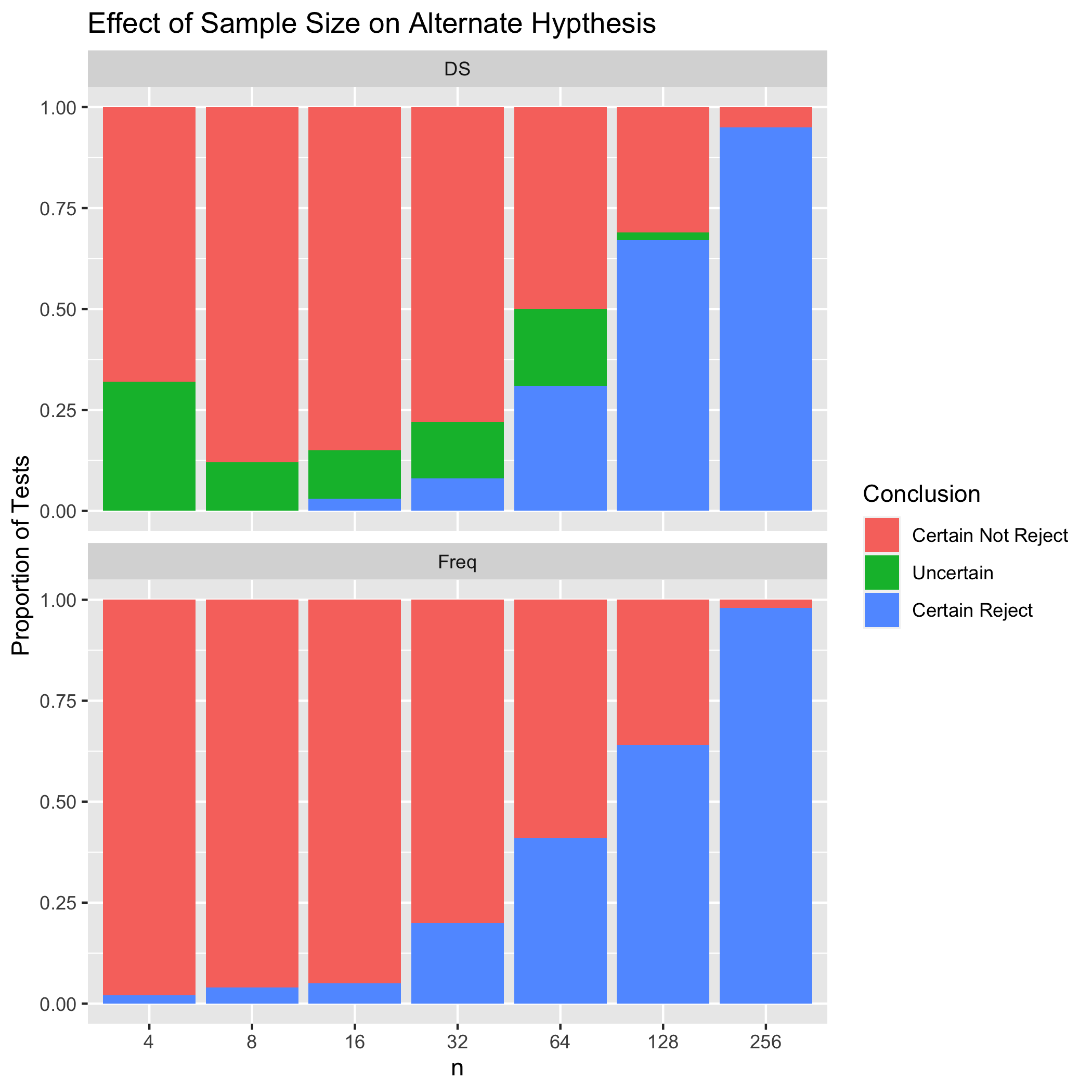}
    \caption[Power when null is False]{Results of Frequentist and DS hypothesis test of uniformity when the null is false. Once again, with the frequentist test, at low sample sizes, it is unclear if the alternate is correct or if the sample size is lacking, while the DS makes this clear through the unknown class that the sample size is lacking.}
    \label{fig:Sample_PowerH1}
\end{figure}

\subsection{DS can model Adversarial Attacks on a Test}
In addition to providing a novel characterization of power, our approach can also employ a technique called ``weakening" to evaluate the effect an adversarial attack would have on a hypothesis. Weakening is a common practice in reliability engineering when one gives additional weight for the unknown class either to satisfy long-running frequentist properties \cite{10.1214/10-STS322} or to account for potential uncertainties in the observed data \cite{33f451a631da44f9873b3e679d527013}. For our DS model, we will perform weakening by artificially increasing the width of our polytopes. Formally:
\begin{definition}[$\alpha$-Weakened Hypothesis Test] An $\alpha$-Weakened Hypothesis test is the same as the original DS hypothesis testing procedure except the Dirichlet parameters are drawn from:
$$(Z_0, ... , Z_k) \sim Dirichlet(1 + \alpha, x_1, ... , x_k) $$
where $\alpha > 0 $.
\end{definition}
Recalling that $Z_0$ directly controls the width of the random polytopes that serves as our posterior estimates, one can see how this directly increases the gap between the upper and lower test statistics. Since our DS testing procedure returns an ``unknown" when the upper and lower test statistics disagree, this weakening directly leads to increasing the possibility of a test returning an `unknown" result.
\\\\
As a demonstration of this, in Figure \ref{fig:weaken1}, we added weakening to a 4-class multinomial hypothesis test with the same alternate hypothesis as the previous simulation. The sample size is 128 and once again, no multiple testing correction was performed. From the results in the figure, we can see that increasing the amount of weakening leads to a dramatic increase in the probability of an uncertain result with a weakening of around 10-20 leading to almost all tests resulting in an unknown result.
\begin{figure}
    \centering
    \includegraphics[width = 0.6\textwidth]{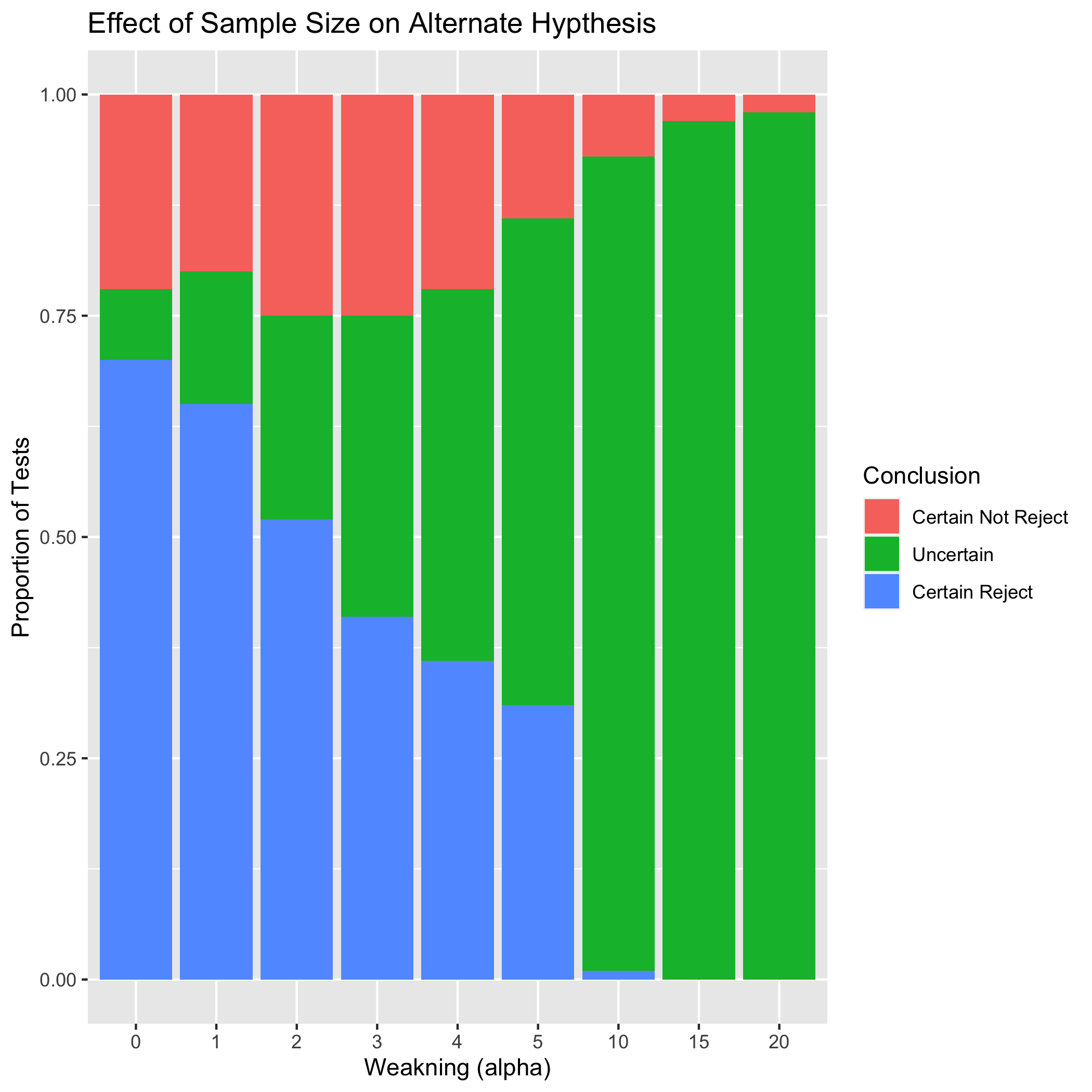}
    \caption[Effect of Weakening on Power]{Effect of weakening on a DS test of uniformity when the null is false. As the number of adversarial samples (represented by alpha) increases, the conclusions become increasingly more muddled. }
    \label{fig:weaken1}
\end{figure}
\\\\
In addition to allowing for the addition of user-specified uncertainty, this form of weakening comes with an easy-to-understand interpretation. For demonstration purposes, let us consider a 3-dimensional hypothesis test in which the observed counts were $(n_1,n_2,n_3)$ with total count $n^* = n_1 + n_2 + n_3$. Since:
$$(Z_0, Z_1, Z_2, Z_3) \sim Dirichlet(1,n_1,n_2,n_3) $$
component wise, the $Z_i$'s have the distributions:
\begin{itemize}
    \item $Z_0 \sim Beta(1, n + 1)$
    \item $Z_1 \sim Beta(n_1,  n+ 1 - n_1 )$
    \item $Z_2 \sim Beta(n_2,  n+ 1 - n_2)$
    \item $Z_3 \sim Beta(n_3, n+1 - n_3)$
\end{itemize}
As described previously, $Z_0$ is then added to each of the other Zs to define the edges of the polytope:
\begin{itemize}
    \item $(Z_1 + Z_0, Z_2, Z_3)$
    \item $(Z_1 , Z_2+ Z_0, Z_3)$
    \item $(Z_1 , Z_2, Z_3+ Z_0)$
\end{itemize}
So $Z_0$ controls the width of our polytope while $Z_1, Z_2, Z_3$ controls the location of the polytope. \\\\
Now, let us say that additional $\alpha$ data points has been added to our dataset. We are unclear which of the 3 classes it has been added to, so we act as though it was as likely to have been added to any of the 3 class. This would result in observed counts of $(n_1 + \frac{\alpha}{3}, n_2+ \frac{\alpha}{3}, n_3+ \frac{\alpha}{3})$ and $n = n_1 +n_2 +n_3 + \alpha$. Component wise, our $Z_i$'s will become:
\begin{itemize}
    \item $Z_0 \sim Beta(1 + \alpha, n + 1 + \alpha )$
    \item $Z_1 \sim Beta(n_1,  n+ 1 + \alpha - n_1 )$
    \item $Z_2 \sim Beta(n_2,  n+ 1 + \alpha - n_2)$
    \item $Z_3 \sim Beta(n_3, n+1 + \alpha - n_3)$
\end{itemize}
Thus, our width, $Z_0$, has been increased proportional to the addition of $\alpha$ datapoints. Therefore, \textit{$\alpha$- weakening is analogous to the effect of adding $\alpha$ datapoints to any of the  classes}. If we have previously confidently rejected or failed to reject before weakening, but then became unknown after $\alpha$-weakening, this indicates that $\alpha$ data points are sufficient to contradict our previous results. Combine this with the previous assertions that $\pi_{lower}$, $\pi_{upper}$ serve as a reasonable bound on the behavior of $\pi_{freq}$ and we can also make reasonable claims about the behavior of $\pi_{freq}$ under $\alpha$-weakening. \\\\
Returning to Figure \ref{fig:weaken1}, this simulation indicates that if 20 datapoints were added to the original 128, it is possible to have the hypothesis reject (hence why $\pi_{lower} \leq 0.05)$ and not reject (hence by $\pi_{upper} \geq 0.05$). Stated another way, with 20 additional datapoints, you can make a hypothesis test return whichever result you would like. 

\section{Using Dempster-Shafer Inference to study text data in Verbal Autopsies }

In this section, we demonstrate Dempster-Schafer inference in the context of verbal autopsies (VAs).  In particular, in low-resource settings, there is a lack of critical public health infrastructure to comprehensively record all births and deaths~\citep{mahapatra2007civil, setel2007scandal}. When deaths occur outside of hospitals or are not routinely recorded, VAs are used as a cost-effective approach to learning about the distribution of deaths by cause.  VAs consist of a survey with a surviving family member or caregiver.  The typical VA survey contains a mix of categorical (e.g. did the decedent have a fever?) and free response text describing symptoms that a subject seemed to exhibit before passing away. The symptoms are then tabulated and inputted into a specialized designed classification algorithm such as Tariff 2.0 \cite{Serina2015}, InterVA \cite{pmid31146736}, and InsilicoVA \cite{doi:10.1080/01621459.2016.1152191}. 

As collecting data for VA surveys involves training personnel in countries that may already be struggling to ensure adequate public health resources and interviews may bring back potentially traumatic memories, there have been recent developments in choosing more efficiently the types of questions to ask during the interview \cite{yoshida2023bayesian}. Building off of this, we would like to use our DS test to help determine which words in the free-response section may help us discriminate various causes of death.

We take data from the Population Health Metrics Research Consortium (PHMRC) which contains verbal autopsy and physical autopsy information from 7841 adults taken from 6 sites in developing nations around the world \cite{Murray2011}. As this contains information about verbal autopsy as well as the true cause of death it is often considered a ``Gold Standard" dataset that verbal autopsy algorithms are validated on. In this example, we will focus our attention on ability of the text portion of PHMRC to discriminate between potentially similar causes of death, which in this case are the 8 different types of cancers: Breast, Cervical, Colorectal, Esophageal, Leukemmia/Lymphomas, Lung, Prostate, and Stomach. 854 deaths were from one of these cancers and had the top ``GS Level 1"  in the cause of death assessment. To determine these causes of death, there were a total of 624 uniquely identified words that were used at least once. \\
We will be testing the DS multinomial test as a goodness-of-fit hypothesis that the probability of the usage of a word across causes of death matches the prevalence of the diseases. Words that reject the hypothesis indicate a potential utility in discerning causes of death. For each word,  a DS Hypothesis test with 1000 replicates was performed. As a point of comparison, a resampled chi-squared test with 1000 replicates was also performed. Both test's conclusions were made at the $\alpha =0.05$ level. 

\begin{figure}[ht]
    \centering
    \includegraphics[width = \textwidth]{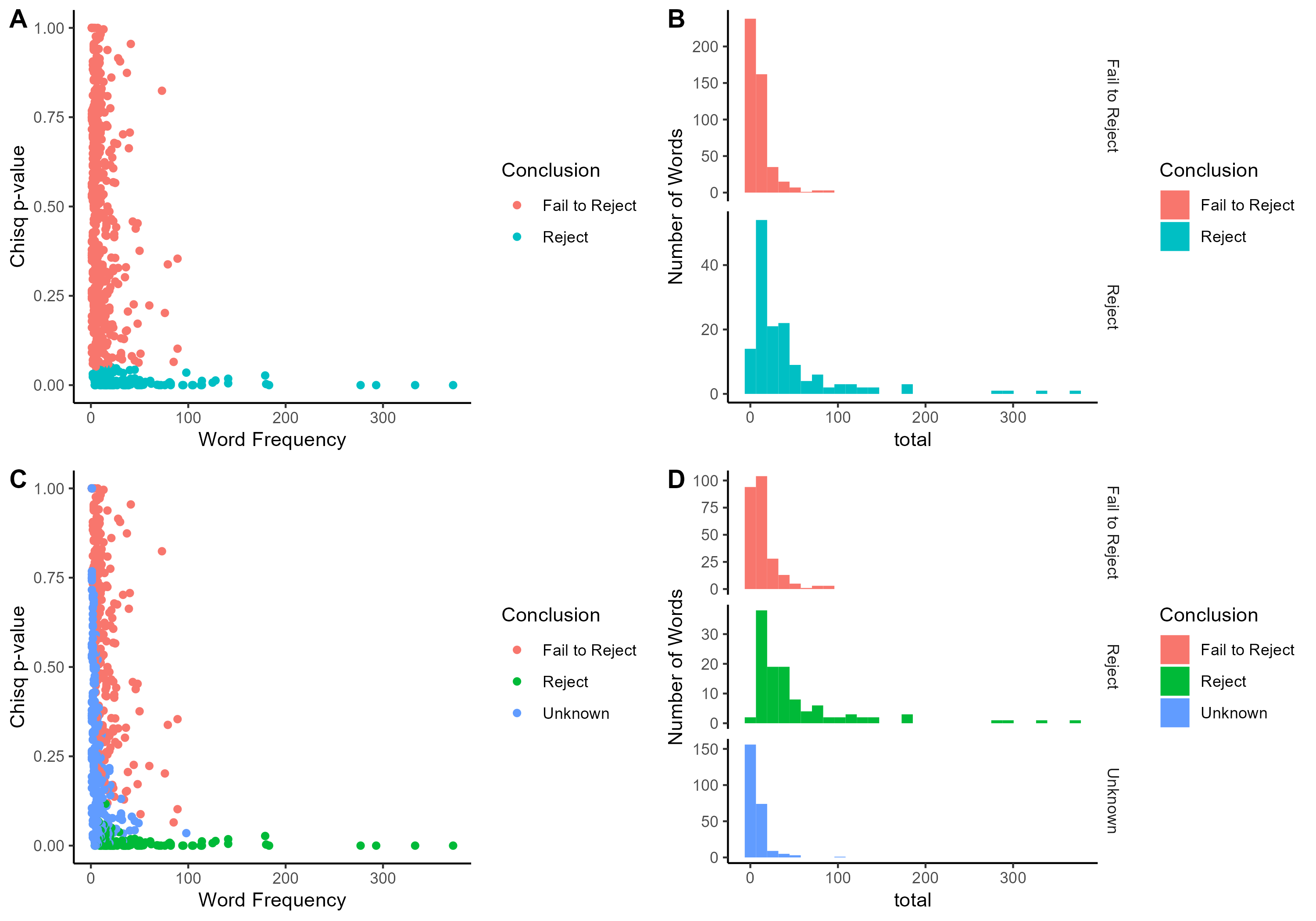}
    \caption{(Top Left): Chi squared test of word significance. The vertical grey line represents the sample size of 50 which was used in \cite{Serina2015}. (Top Right): Histograms of the sample size of the words by Hypothesis test conclusion. (Bottom Left): DS Hypothesis test of word significance.}
    \label{fig:PHMRC_pvalue}
\end{figure}

Figure \ref{fig:PHMRC_pvalue} A and C demonstrate the number of people who used a word as well as it's goodness-of-fit p-value under the chi-squared test and the DS test. We can see in both that the distribution of word count is heavily right skewed, This indicates that many words were used only by a small number of subjects and accords with well known, empirical observations of  word use in natural language \cite{zipf1999psycho}.

In both tests, we see an agreement on the words for which we have evidence to reject the null hypothesis albeit. However, the two tests differ when it comes to the words that the chi-square test failed to reject. While the frequentist test categorizes anyone with a p-value above 0.05 in the ``Fail to Reject" cateogry, the DS test splits this group into two similarly sized groups. These are the words for which Fail to Reject with confidence, and words for which we lack evidence to conclude either way.  The words for which we lack the evidence to conclude either way includes some medical terms such as "pneumonia" (used 31 times) and "diabet[us] (used 26 times)". While pneumonia and diabetus are known to be associated with various forms of cancer or cancer treatment \cite{pmid31516903,Giovannucci2010}, the low sample size could indicate an unfamiliarity with medical terminology in the low-resource areas where VAs were performed. On the other hand, words such as "vomit" (used 28 times) and "urin[e]" (used 23 times) were rejected with confidence by the DS test. In total, by being able to separate words which maybe helpful to discriminate between causes of death but may not see sufficient usage due to educational or other resource related issue from words for which we have some confidence are unrelated to the cause of death gives the investigator a deeper insight into the data, as well as gives potential inspiration for designing more focused interviews in the future. 

Finally, we demonstrate the DS tests' results as a variable prescreening procedure. Words that were deemed significant were entered into a purpose-built classification method for verbal autopsies called Tarrif \cite{Serina2015}. On table \ref{tab:1} , we see that the Tariff algorithm, when trained on all 624 words, is able to classify each verbal autospy into one of the 8 different types of cancer with an accuracy of 31.5\%. This compares favorably to the meager accuracy of 15. 97\% according to the rule of thumb in the Tariff to use all words used at least 50 times. This poor performance is corroborated by the results of the previous goodness-of-fit tests, as many words which both the chi-squared and DS test deemed significant were used by less than 50 people. This seems to indicate that rule of thumb to ignore any words less than 50 times which was presented in  \cite{Serina2015} is sub-optimal. Instead, we recommend that a word's inclusion should be a factor of both sample size and the strength of the evidence, which are factors that are considered in the chi-squared and the DS test. Finally, the best preforming models were those trained on the rejected words by the chi-squared an DS test, albeit with the DS test gave the same classification accuracy based on 114 words rather than the 149 of the chi-squared test. The fact that fewer words were rejected by DS can be explained by how the DS procedure is classifying each word under a trivariate rather than binary scheme, and thus it is expected that some borderline words which were rejected by chi-squared maybe classified as unknown by DS. 

\begin{table}[h]
\centering
\begin{tabular}{|c|c|c|}
\hline
Scheme  &Number of Words & Accuracy \\ \hline
    Use all Words  & 624 & 31.25 \%\\ \hline
    Use all Words that were used atleast 50 times & 60 & 15.97 \% \\ \hline
    Use all words there were rejected by chisq at $\alpha =0.05$ & 149 & \textbf{40.28} \% \\\hline
    Use all words there were rejected by DS at $\alpha =0.05$ & 114 & \textbf{40.28} \% \\ \hline
    Use all words there were rejected or Unknown by DS at $\alpha =0.05$ & 362 & {35.42  } \% \\ \hline
\end{tabular}
\label{tab:1}
\caption{Classification Accuracy of the Tariff 2.0 algorithm given various  }
\end{table}

\section{Conclusion}
In this paper, we have demonstrated how to construct an alternative approach to p-values for multinomial data that allows a test to return an "unknown" result. By allowing for this third category, we believe that this allows for one to gain new insights into the data while simultaneously keeping much of the intuition and implications that make p-value based hypothesis testing appealing in the first place. It is the authors belief that such an approach will provide an important voice in the continued conversation in the future use of p-values in scientific endevors.

\renewcommand*\rot[2]{\multicolumn{1}{R{#1}{#2}}}

\bibliography{biblio}

\end{document}